\tikzset{font={\fontsize{8pt}{10}\selectfont}}
\newtheorem{lemma}{Lemma}[section]
\newtheorem{remark}{Remark}[section]
\newtheorem{definition}{Definition}[section] 
\title{Checking chordality on homomorphically encrypted graphs}
\author{
  Yang Li\\
  School of Computing\\
  The Australian National University\\
  Canberra, ACT, 2600 \\
  \texttt{kelvin.li@anu.edu.au} \\
}
\begin{document}
\maketitle

\begin{abstract}
    The breakthrough of achieving fully homomorphic encryption sparked enormous studies on where and how to apply homomorphic encryption schemes so that operations can be performed on encrypted data without the secret key while still obtaining the correct outputs. Due to the computational cost, inflated ciphertext size and limited arithmetic operations that are allowed in most encryption schemes, feasible applications of homomorphic encryption are few. While theorists are working on the mathematical and cryptographical foundations of homomorphic encryption in order to overcome the current limitations, practitioners are also re-designing queries and algorithms to adapt the functionalities of the current encryption schemes. As an initial study on working with homomorphically encrypted graphs, this paper provides an easy-to-implement interactive algorithm to check whether or not a homomorphically encrypted graph is chordal. This algorithm is simply a refactoring of a current method to run on the encrypted adjacency matrices.
\end{abstract}

\section{Introduction}

Let \textit{Alice} be the data owner of some sensitive data that is stored in a plaintext format. The principle of a cryptosystem is to securely transform Alice's data into a ciphertext so that it can be shared with others on non-secure channels and does not reveal any additional knowledge of the plaintext without access to the secret key for decryption. In addition to data sharing, if a computational agent, namely \textit{Bob}, could perform some analysis on the encrypted data without the secrete key, it would be tempting for Alice to outsource computations on the encrypted data to Bob. This ideal situation is exactly what homomorphic encryption researchers are trying to achieve. 

With homomorphic encryption (HE), basic arithmetic operations - addition, subtraction and multiplication - can be performed on encrypted data. The encrypted result can then be decrypted with the secret key to the correct result. There are, however, strict constraints on what arithmetic operations can be done and how many of them can be done before decryption fails. There have been a few studies on analysing homomorphically encrypted data to demonstrate the usefulness of HE in some real scenarios \citep{pathak2011privacy,naehrig2011can,wu2012using,graepel2012ml,bos2013improved,lauter2014private,yao2016learning}. 
That being said, all the current methods for analysing the encrypted data are simply re-engineering of the existing functions or algorithms for unencrypted data. Furthermore, multiple interactions between Alice and Bob are needed in order to remedy the limited functionality of the current HE schemes. 

As far as graphs are concerned, there have not been many studies on dealing with homomorphically encrypted graphs \citep{xie2014cryptgraph}. Obviously, graph adjacency matrices can be encrypted and simple graph statistics can be computed on the encrypted adjacency matrices, as long as the computations only involve the allowed operations, such as calculating vertex degrees.  
Based on this principle, it seems natural to separate existing graph functions or algorithms into three groups. The first group contains those that can be done naturally on encrypted graphs because they involve only HE schemes permitted operations. The second group contains those that cannot be naturally performed on HE encrypted graph data but can be re-engineered to do so. The last group contains those that cannot be done whatsoever given the limitations of the current HE development. 

As an initial investigation into this direction of research, this paper proposed a way to check the chordality of encrypted graphs.
Chordal graphs are simple but useful in many scenarios. \citet{rose1970triangulated} studied the application of chordal graphs in efficient Gaussian elimination of sparse symmetric matrices. In that, a matrix can have a corresponding chordal graph, which has a \textit{perfect elimination ordering} that corresponds to an efficient elimination scheme of the matrix. 
When doing belief propagation on graphical models using the junction tree algorithm \citep{lauritzen1988local}, the graph structures (e.g., moral graphs of Bayesian networks) need to be triangulated (i.e., made chordal) to ensure junction tress can be constructed for efficient message updating. 

Due to the limited operations that are allowed by HE schemes, the mechanism proposed here is an iterative process between Alice and Bob. The encryption step is done at Alice's side and ciphertext is shared with Bob, who then performs matrix additions and multiplications on the encrypted adjacency matrix. The output of the computation is a vector of encrypted integers, one for each graph vertex. 
Alice then decrypts, updates, encrypts and shares the vector with Bob again for the next round of computations. In the worst case, this takes at most $n$ iterations, where $n$ is the number of vertices in the graph.

\section{Chordal graph}
Throughout, the graphs considered are simple, connected and undirected. A chordal graph is also known by different names such as triangulated or recursively simplicial. Each of these names gives rise to a different but equivalent definition.

\begin{definition}
	A graph is \textbf{chordal} if every cycle of length four or more has a chord, which is an edge that is not on the cycle but connects two vertices of the cycle. 
\end{definition}

Equivalently, a graph is chordal if every induced cycle is a triangle. Hence, the name triangulated. 
Figure \ref{fig:chordal and non-chordal example} gives an example of a chordal graph and a non-chordal graph. 

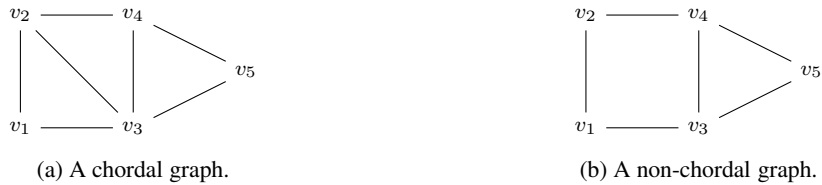
\begin{figure}[h]
	\centering
	\begin{subfigure}[b]{0.45\textwidth}
		\centering
		\begin{tikzpicture}[scale=1]
		\begin{scope}[>={Stealth[black]},every edge/.style={draw=black}]
		\node (F) at (3,0.5) {$v_2$};
		\node (G) at (4.5,0.5) {$v_4$};
		\node (H) at (3,-1) {$v_1$};
		\node (I) at (4.5,-1) {$v_3$};
		\node (J) at (6,-0.25) {$v_5$};
		\path [-] (F) edge (G);
		\path [-] (F) edge (H);
		\path [-] (G) edge (I);
		\path [-] (H) edge (I);
		\path [-] (G) edge (J);
		\path [-] (I) edge (J);
		\path [-] (I) edge (F);
		\end{scope}
		\end{tikzpicture}
		\caption{A chordal graph.}
		\label{subfig:chordal graph}
	\end{subfigure}
	\begin{subfigure}[b]{0.45\textwidth}
		\centering
		\begin{tikzpicture}[scale=1]
		\begin{scope}[>={Stealth[black]},every edge/.style={draw=black}]
		\node (F) at (3,0.5) {$v_2$};
		\node (G) at (4.5,0.5) {$v_4$};
		\node (H) at (3,-1) {$v_1$};
		\node (I) at (4.5,-1) {$v_3$};
		\node (J) at (6,-0.25) {$v_5$};
		\path [-] (F) edge (G);
		\path [-] (F) edge (H);
		\path [-] (G) edge (I);
		\path [-] (H) edge (I);
		\path [-] (G) edge (J);
		\path [-] (I) edge (J);
		\end{scope}
		\end{tikzpicture}
		\caption{A non-chordal graph.}
		\label{subfig:non chordal graph}
	\end{subfigure}
	\caption{An example of a chordal graph and a non-chordal graph.}
	\label{fig:chordal and non-chordal example}
\end{figure}

The most relevant definition to this work is recursively simplicial. 

\begin{definition}
	A vertex in a graph is called \textbf{simplicial} if its neighbours form a clique, i.e., a complete subgraph. 
\end{definition}

For example, the vertex $v_1$ is simplicial in Figure \ref{subfig:chordal graph}. The vertex $v_5$ is simplicial is both Figure \ref{subfig:chordal graph} and \ref{subfig:non chordal graph}.

Let $G=(V,E)$ be a graph with the vertex set $V$ and the edge set $E$. If a vertex is removed from a graph, the edges that are incident to the vertex are too removed. For simplicity, this is not always stated explicitly. Denote by $G\setminus \{v\}$ the subgraph of $G$ after removing the vertex $v$.  

\begin{definition}
	\label{def:rs}
	A graph $G$ is \textbf{recursively simplicial} if it has a simplicial vertex $v$ and the subgraph $G\setminus \{v\}$ is recursively simplicial, until the graph is empty.
\end{definition}

The graph in Figure \ref{subfig:chordal graph} is recursively simplicial. It has a simplicial vertex $v_5$ and the subgraph after removing $v_5$ (and the edges $\{v_3 v_5, v_4 v_5\}$) is again recursively simplicial. Repeatedly, the original graph is eliminated to the empty graph.  
The graph in Figure \ref{subfig:non chordal graph}, however, is not recursively simplicial. After removing the only simplicial vertex $v_5$ from it, the remaining graph is a square that has no simplicial vertex, hence not recursively simplicial. 

Definition \ref{def:rs} is a constructive definition that leads to a vertex ordering, following which a chordal graph can be eliminated completely. This vertex ordering is called a \textit{perfect elimination ordering} of the graph. There may be more than one perfect elimination ordering in a graph (as in the case of Figure \ref{subfig:chordal graph}). A more in-depth review of some studies of chordal graphs and related characteristics is in \citep{heggernes2006minimal}. 

From more than three decades ago, it was known that chordal graphs can be recognized in linear time \citep{rose1976algorithmic,tarjan1984simple}. This paper, however, does not refactor the algorithm proposed by \cite{rose1976algorithmic}. Their algorithm is based on the lexicographic breath-first search and produces a lexicographic ordering of the vertices, the reverse of which is a perfect elimination ordering if and only if the graph is chordal. The algorithm was designed so that it is easy to be implemented on computers. But it involves comparisons of indices and sets that are not supported by most (if not all) HE schemes. 

Instead, the strategy adopted here (Algorithm \ref{alg:check chordal}) is based on the constructive definition of recursively simplicial. It runs in polynomial time and can be easily performed with only additions and multiplications of graph adjacency matrices and some help from Alice as shown later in Section \ref{sec:he check chordal}. 

\begin{figure}[ht]
    \centering
    \begin{minipage}{0.5\textwidth}
        \begin{algorithm}[H]
             \SetKwInOut{Input}{Input}
             \SetKwInOut{Output}{Output}
             \Input{a graph $G$}
             \Output{TRUE or FALSE}
             \While{exists a simplicial vertex $v$}{
                $G=G\setminus \{v\}$
             }
             
             \eIf{$G=\emptyset$}{
               return TRUE\;
               }{
               return FALSE\;
              }
         \caption{Checking chordality}
         \label{alg:check chordal}
        \end{algorithm}
    \end{minipage}
\end{figure}

\begin{remark}
\label{remark:sim nodes order}
Since there may be more than one simplicial vertex in a graph, an important remark of Algorithm \ref{alg:check chordal} is that the algorithm is correct regardless of the order of the simplicial vertices been removed and the number of simplicial vertices been removed at once. For the rigorous proof of this, the readers are suggested to read the survey \citep{heggernes2006minimal}.
\end{remark}

\section{Homomorphic encryption}

Shortly after the well known RSA encryption scheme \citep{rivest1978method} was released, \cite{rivest1978data} raised the question of whether it is possible to perform arithmetic operations (e.g., addition and multiplication) on encrypted data without the secret key, so that the results can be decrypted to the correct results if the same operations were performed on the plaintext. An encryption scheme possessing such a property is called a homomorphic encryption scheme. It can be formally stated as follows (under the public key schemes). 

\begin{definition}
Let $m_1$ and $m_2$ be two plaintexts, $pk$ and $sk$ be the public key and secret key for encryption and decryption, respectively. If an encryption scheme satisfies that for an operation $\circ_m$ in the plaintext space, there is a corresponding operation $\circ_c$ in the ciphertext space such that  
    \begin{equation}
    \label{eq:he}
        Dec(sk,Enc(pk,m_1) \circ_c Enc(pk,m_2)) = m_1 \circ_m m_2,
    \end{equation}
    then the encryption scheme is said to be \textbf{homomorphic}. 
\end{definition}

Most of the HE schemes have the same operations in both plaintext and ciphertext spaces. That is, additions of ciphertexts can be decrypted to the addition of plaintexts. So is true for multiplications. 
The name of ``homomorphic'' may have been taken from \textit{homomorphism} in mathematics. The analogy is that decryption is a homomorphism from the ciphertext space to the plaintext space that preserves the operations of the spaces as stated in Equation \ref{eq:he}. The process is also shown in a diagram in Figure \ref{fig:he}. 

\pgfdeclarelayer{background}
\pgfdeclarelayer{foreground}
\pgfsetlayers{background,main,foreground}


\begin{figure}[h]
    \centering
    \begin{tikzpicture}[scale=0.9]
	\node (raw) [draw, text width=2em, fill=blue!20, minimum height=2.5em, text centered, rounded corners] {$A$};
    \path (raw)+(3.5,0) node (cipher1) [draw, fill=blue!5, text width=4em, 
    text centered, minimum height=2.5em, rounded corners] {$\tilde{A}$};
    \path (raw)+(9,0) node (cipher2) [draw, fill=blue!7, text width=4em, 
    text centered, minimum height=2.5em, rounded corners] {$\tilde{A}$};
    \path (cipher2)+(0,-3) node (cipher3) [draw, fill=yellow!15, text width=8em, 
    text centered, minimum height=2.5em, rounded corners] {$\tilde{M}=(\tilde{A}*\tilde{A})\cdot \tilde{A}$};
    \path (raw)+(0,-3) node (raw2) [draw, fill=blue!20, text width=2em, 
    text centered, minimum height=2.5em, rounded corners] {$M$};
    \path (raw2)+(3.5,0) node (cipher4) [draw, fill=blue!5, text width=4em, 
    text centered, minimum height=2.5em, rounded corners] {$\tilde{M}$};
    
    \path [draw, ->] (raw) -- node [above] {Enc(pk)} (cipher1.west |- raw);
    \path [draw, ->] (cipher1) -- node [above,pos=0.41] {Send} (cipher2.west |- cipher1);
    \path [draw, ->] (cipher2) -- node [left] {Evaluate} (cipher2.south |- cipher3.north);    
    \path [draw, ->] (cipher3) -- node [above] {Send} (cipher4.east |- cipher3);
    \path [draw, <-] (raw2) -- node [above] {Dec(sk)} (cipher4.west |- raw2);
    \path [draw, ->, dashed] (raw) -- node [left,text width=7em] {If Alice had the computing power} (raw.south |- raw2.north);    
 
    \path (raw.north)+(-1.1,1.1) node (alice) {\large Alice - data owner};
    \path (cipher2.north)+(0.5,1.1) node (bob) {\large Bob - computational agent}; 
    
    \begin{pgfonlayer}{background}
        \path (raw.west |- raw.north)+(-2.5,1.5) node (a) {};
        \path (raw2.south -| cipher1.east)+(+0.5,-0.5) node (b) {};
        \path[fill=green!20,rounded corners, draw=black!50, dashed]
            (a) rectangle (b);
                  
        \path (cipher3.west |- raw.north)+(-0.5,1.5) node (a) {};
		\path (raw2.south -| cipher3.east)+(+2,-0.5) node (b) {};
		\path[fill=red!20,rounded corners, draw=black!50, dashed]
			(a) rectangle (b);      
        
    \end{pgfonlayer}
    
\end{tikzpicture}
    \caption{Diagram of a homomorphic encryption scheme. The data owner Alice encrypted the data $A$ using the public key and sent the ciphertext $\tilde{A}$ to the computational agent Bob, who then performed HE permitted operations on the ciphertext. We denote by $*$ and $\cdot$ the matrix multiplication and matrix element-wise multiplication, respectively. The resulting ciphertext $\tilde{M}$ was sent back to Alice for decryption using the secret key. Alice could obtain the same result $M$ from $A$ if she had the computing power to perform the same operations that Bob did on the ciphertext.}
    \label{fig:he}
\end{figure}
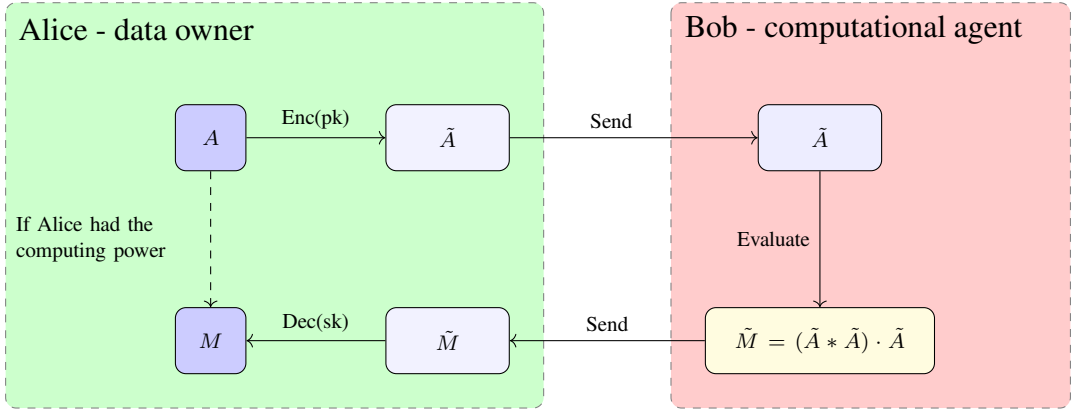

It is important to note, however, that due to \textit{semantic security} that is required by HE, encryption is not homomorphic, i.e.,  
\begin{equation*}
    Enc(pk,m_1) \circ_c Enc(pk,m_2)) \neq Enc(pk,m_1 \circ_m m_2),
\end{equation*}
because encryptions in HE are probabilistic instead of deterministic. Simply speaking, semantic security states that given two messages $m_1$ and $m_2$ and a ciphertext $c$ that encrypts either one of them, it is ``hard'' to decide which one $c$ encrypts. To achieve semantic security, an HE scheme often injects a small amount of random noise into the ciphertext so that there are many choices of ciphertext for encrypting a plaintext. 

Many encryption schemes have been shown to be able to handle either addition or multiplication, but not both. And only a bounded number of these operations can be applied before decryption fails. The breakthrough of achieving an unlimited number of operations of both additions and operations, known as fully homomorphic encryption (FHE), was made by \cite{gentry2009fully}. Since then, there have been some great works that advanced the field by proving new security guarantees or efficient schemes for FHE.   

\section{Check chordality on encrypted graphs} 
\label{sec:he check chordal}
This section presented the details of how to re-engineer Algorithm \ref{alg:check chordal} so that it uses only HE allowed operations on encrypted adjacency matrices.

\begin{figure}[ht]
    \centering
    \begin{tikzpicture}[scale=1]
		\begin{scope}[>={Stealth[black]},every edge/.style={draw=black}]
		\node (F) at (3,0.5) {$v_2$};
		\node (G) at (4.5,0.5) {$v_4$};
		\node (H) at (3,-1) {$v_1$};
		\node (I) at (4.5,-1) {$v_3$};
		\node (J) at (6,-0.25) {$v_5$};
		\path [-] (F) edge (G);
		\path [-] (F) edge (H);
		\path [-] (G) edge (I);
		\path [-] (H) edge (I);
		\path [-] (G) edge (J);
		\path [-] (I) edge (J);
		\path [-] (I) edge (F);
		\end{scope}
		\end{tikzpicture}
    \caption{A chordal graph.}
    \label{fig:chordal graph 2}
\end{figure}
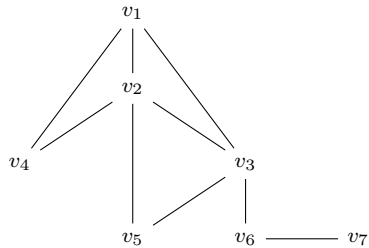

Given the graph in Figure \ref{fig:chordal graph 2}, its adjacency matrix is
\begin{equation*}
    A=
    \begin{pmatrix}
    0 & 1 & 1 & 0 & 0\\
    1 & 0 & 1 & 1 & 0\\
    1 & 1 & 0 & 1 & 1\\
    0 & 1 & 1 & 0 & 1\\
    0 & 0 & 1 & 1 & 0
    \end{pmatrix},
\end{equation*}
where its rows and columns have labels $\{v_1, v_2, v_3, v_4, v_5\}$.
Multiplying the adjacency matrix by itself gives another matrix
\begin{equation*}
    A_2=A*A=
    \begin{pmatrix}
    2 & 1 & 1 & 2 & 1\\
    1 & 3 & 2 & 1 & 2\\
    1 & 2 & 4 & 2 & 1\\
    2 & 1 & 2 & 3 & 1\\
    1 & 2 & 1 & 1 & 2
    \end{pmatrix}.
\end{equation*}
Each entry $A_2[i,j]$ corresponds to the number of 2-paths from vertex $i$ to $j$ which may or may not be 0. Multiplying the matrix $A$ and $A_2$ element-wise gives another matrix 
\begin{equation*}
    M=A\cdot A_2=
    \begin{pmatrix}
    0 & 1 & 1 & 0 & 0\\
    1 & 0 & 2 & 1 & 0\\
    1 & 2 & 0 & 2 & 1\\
    0 & 1 & 2 & 0 & 1\\
    0 & 0 & 1 & 1 & 0
    \end{pmatrix}.
\end{equation*}
In each entry $M[i,j]$, if $j$ is a neighbour of the vertex $i$, then it corresponds to the number of 2-paths from vertex $i$ to $j$ which again may or may not be 0. If $j$ is not a neighbour of $i$, the entry is always 0. 

The intuition behind these multiplications is that if two vertices $u$ and $v$ in a graph $G$ is both connected by a 1-path and a 2-path, then the vertices $u,v$ and a third vertex form a triangle. Therefore, identifying a simplicial vertex from a graph becomes identifying a vertex, which forms a triangle with each pair of its neighbours. In other words, together they form a clique. 

The graph $G$ in Figure \ref{fig:chordal graph 2} has two simplicial vertices, $v_1$ and $v_5$. It is noticeable that in the matrix $M$, the rows and columns correspond to these two simplicial vertices are quite unique, comparing with other rows and columns. This is stated in the following lemma. 

\begin{lemma}
\label{lemma:suf and nec condi for sim nodes}
Given a graph $G = (V,E)$, a vertex $v$ with degree $d(v)$ in $G$ is simplicial if and only if its corresponding row $M[v,]$ and column $M[,v]$ in the matrix $M$ satisfy the following: 
\begin{enumerate}
    \item there are $d(v)$ non-zero entries in $M[v,]$ and $M[,v]$ and 
    \item all these non-zero entries have the same value $d(v)-1$.
\end{enumerate}
\end{lemma}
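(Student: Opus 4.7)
The plan is to unpack what entries of $M$ actually count. By definition $M[v,j] = A[v,j]\cdot (A*A)[v,j] = A[v,j]\cdot \sum_{k} A[v,k]\,A[k,j]$. So when $j$ is not a neighbour of $v$ the entry is $0$, and when $j\in N(v)$ the entry equals $|N(v)\cap N(j)|$, i.e., the number of common neighbours of $v$ and $j$ (equivalently, the number of triangles containing the edge $vj$). Since $A$ is symmetric, $M$ is symmetric, so the row $M[v,\cdot]$ and the column $M[\cdot,v]$ carry exactly the same information; I would state this once and then argue only about the row.

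For the forward direction, I would assume $v$ is simplicial, so $N(v)$ induces a clique of size $d(v)$. Fix any $j\in N(v)$; every other vertex of $N(v)$ is adjacent to $j$ (by the clique property), so $N(v)\setminus\{j\}\subseteq N(v)\cap N(j)$. Conversely, $N(v)\cap N(j)\subseteq N(v)\setminus\{j\}$ because the graph is simple (so $j\notin N(j)$). Hence $M[v,j] = d(v)-1$ for every $j\in N(v)$ and $M[v,j]=0$ otherwise, giving exactly $d(v)$ nonzero entries all equal to $d(v)-1$, as required.

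For the converse, I would assume the two numerical conditions hold. Since $M[v,j]$ can be nonzero only when $j\in N(v)$, and there are exactly $d(v)$ nonzero entries in the row, the nonzero positions are precisely $N(v)$. Condition 2 then says $|N(v)\cap N(j)| = d(v)-1$ for every $j\in N(v)$. Because $N(v)\cap N(j)$ is a subset of $N(v)\setminus\{j\}$ (which has size $d(v)-1$), the equality forces $N(v)\cap N(j) = N(v)\setminus\{j\}$, i.e., every other neighbour of $v$ is adjacent to $j$. Since $j\in N(v)$ was arbitrary, $N(v)$ induces a complete subgraph, so $v$ is simplicial.

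I do not expect any serious obstacle; the proof is essentially a careful counting argument. The only subtle point worth being explicit about is the use of simpleness of the graph (no self-loops) to guarantee $j\notin N(j)$, which is what pins down the size of the set $N(v)\setminus\{j\}$ at exactly $d(v)-1$ and makes the bound in Condition 2 tight. Everything else is direct manipulation of the definition of $M$ together with the definition of a simplicial vertex.
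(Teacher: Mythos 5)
Your proof is correct and follows essentially the same counting argument as the paper: $M[v,j]$ counts common neighbours of $v$ and $j$ (i.e., 2-paths), which equals $d(v)-1$ exactly when all other neighbours of $v$ are adjacent to $j$. The only difference is that you prove the converse directly via the inclusion $N(v)\cap N(j)\subseteq N(v)\setminus\{j\}$, whereas the paper argues by contradiction with a case split on $d(v)=2$ versus $d(v)\ge 3$; your version is slightly cleaner but not a different approach.
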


\begin{proof}
The lemma is proved for the row vector, but the same argument applies to the column vector too.
Suppose the vertex $v$ is simplicial in $G$.
For each vertex $u \in N(v)$, there are $d(v)-1$ 2-paths from $v$ to $u$, because a 2-path from $v$ to $u$ must go through a neighbour of $v$ and vertices in $N(v)$ are pairwise adjacent due to $v$'s simpliciality. 
Therefore, in each entry $M[v,j]$, if $j$ is a neighbour of the vertex $v$, then the entry has value $d(v)-1$. 

Suppose both conditions are satisfied. 
To prove by contradiction, assume $v$ is not a simplicial vertex in $G$, i.e., there exists vertices $j,k \in N(v)$ such that $jk \notin E$ is not an edge in $G$. 
But both $j$ and $k$ are neighbours of $v$ so their entries $A[v,j]$ and $A[v,k]$ are non-zero. If $v$ has degree $d(v)=2$, then $A_2[v,j]=A_2[v,k]=0$ because there is no 2-path from $v$ to any of them. Hence, after multiplying the two matrices element-wise, it gives $M[v,j]=M[v,k]=0$ which contradicts with condition 1. 
If the degree $d(v) \ge 3$, then the entries $A_2[v,j]$ and $A_2[v,k]$ are non-zero. 
However, since the edge $jk$ is missing from the graph $G$, there are at most $d(v)-2$ 2-paths from the vertex $v$ to $j$ or $k$. So the entries $A_2[v,j]$ and $A_2[v,k]$ are at most $d(v)-2$, which contradicts with condition 2.
Therefore, the vertex $v$ must be simplicial in $G$.
\end{proof}

By using this lemma, one is able to run on a homomorphically encrypted adjacency matrix to look for simplicial vertices with the following operation. For each vertex $i$, calculate 
\begin{equation}
\label{eq:he find sim}
    c=f(i)=\sum_{j=1}^n M[i,j] - \sum_{j=1}^n A[i,j] \cdot \left(\sum_{j=1}^n A[i,j]-1\right).
\end{equation}
Once the calculation is done, the user will receive the encrypted result $c$. Decrypting $c$ with the secret key will give an answer of either 0 or a non-zero integer, indicating the vertex $i$ is simplicial or not, respectively. As there may be more than one simplicial vertex in each graph, Equation \ref{eq:he find sim} can be applied to all vertices in the graph simultaneously. The resulting ciphertext will be a vector, decrypting which will give a vector of integers, where 0 indicates the corresponding vertex is simplicial.   

\begin{figure}[ht]
    \centering
    \begin{tikzpicture}[scale=1]
		\begin{scope}[>={Stealth[black]},every edge/.style={draw=black}]
		
		\node (4) at (0,0) {$v_4$};
		\node (2) at (1.5,1) {$v_2$};
		\node (1) at (1.5,2) {$v_1$};
		\node (5) at (1.5,-1) {$v_5$};
		\node (3) at (3,0) {$v_3$};
		\node (7) at (4.5,-1) {$v_7$};
		\node (6) at (3,-1) {$v_6$};
	
		\path [-] (4) edge (1);
		\path [-] (4) edge (2);
		\path [-] (2) edge (1);
		\path [-] (2) edge (5);
		\path [-] (1) edge (3);
		\path [-] (2) edge (3);
		\path [-] (5) edge (3);
		\path [-] (3) edge (6);
		\path [-] (6) edge (7);
		\end{scope}
		\end{tikzpicture}
    \caption{A chordal graph.}
    \label{fig:check sim nodes}
\end{figure}

For example, given the graph $G$ in Figure \ref{fig:check sim nodes}, its corresponding matrices are shown below with the row and column labels $(v_1,v_2,v_4,v_3,v_5,v_6,v_7)$.
\begin{align*}
    A=
    \begin{pmatrix}
    0 & 1 & 1 & 1 & 0 & 0 & 0\\
    1 & 0 & 1 & 1 & 1 & 0 & 0\\
    1 & 1 & 0 & 0 & 0 & 0 & 0\\
    1 & 1 & 0 & 0 & 1 & 1 & 0\\
    0 & 1 & 0 & 1 & 0 & 0 & 0\\
    0 & 0 & 0 & 1 & 0 & 0 & 1\\
    0 & 0 & 0 & 0 & 0 & 1 & 0
    \end{pmatrix},
    A_2=
    \begin{pmatrix}
    3 & 2 & 1 & 1 & 2 & 1 & 0\\
    2 & 4 & 1 & 2 & 1 & 1 & 0\\
    1 & 1 & 2 & 2 & 1 & 0 & 0\\
    1 & 2 & 2 & 4 & 1 & 0 & 1\\
    2 & 1 & 1 & 1 & 2 & 1 & 0\\
    1 & 1 & 0 & 0 & 1 & 2 & 0\\
    0 & 0 & 0 & 1 & 0 & 0 & 1
    \end{pmatrix},
    M=
    \begin{pmatrix}
    0 & 2 & 1 & 1 & 0 & 0 & 0\\
    2 & 0 & 1 & 2 & 1 & 0 & 0\\
    1 & 1 & 0 & 0 & 0 & 0 & 0\\
    1 & 2 & 0 & 0 & 1 & 0 & 0\\
    0 & 1 & 0 & 1 & 0 & 0 & 0\\
    0 & 0 & 0 & 0 & 0 & 0 & 0\\
    0 & 0 & 0 & 0 & 0 & 0 & 0
    \end{pmatrix}.
\end{align*}
By applying Equation \ref{eq:he find sim} for each vertex $i$ repeatedly, the output is a vector $\Vec{s}=(-2,-6,0,-8,0,-2,0)$ which indicates the vertices $v_4,v_5,v_7$ are simplicial in the current graph. This example demonstrates how to use Equation \ref{eq:he find sim} to find all simplicial vertices at once. In reality, the adjacency matrix $A$ is encrypted homomorphically, so are $A_2$, $M$ and $\Vec{s}$. Hence, the outputs of all operations are ciphertext and make no sense to Bob. 

The time complexity of the matrix multiplication $A*A$ is $O(n^3)$ if directly applying the mathematical definition of matrix multiplication. The element-wise multiplication takes $O(n^2)$ time. So it takes $O(n^3)$ to get the matrix $M$. To test if a vertex is simplicial using Equation \ref{eq:he find sim}, each summation takes time $O(n)$. If $\sum_{j=1}^n A[i,j]$ can be stored in a variable, then it takes $O(n)$ time to tell if a vertex is simplicial or not and $O(n^2)$ time to find all simplicial vertices at once.  
Therefore, the entire process of finding simplicial vertices takes $O(n^3)$ time. The time complexity can be reduced by using more efficient matrix computations, but this is not in the scope of this paper. 

Unfortunately, due to the constraint of not allowing division and comparison operations, there is no way for Bob to recursively delete and check simplicial vertices based on the encrypted adjacency matrix. To be more precise, deleting simplicial vertices can be done by multiplying each row and column of the adjacency matrix by the corresponding value in the output vector, i.e., 
\begin{align}
\label{equ:delete sim nodes}
    A[i,] &= A[i,]\cdot \Vec{s}[i], \nonumber \\
    A[,i] &= A[,i]\cdot \Vec{s}[i],
\end{align}
This way the corresponding rows and columns of simplicial vertices become 0, which is equivalent to remove these vertices from the current graph. 

The difficulty going in this direction is that the corresponding rows and columns of the non-simplicial vertices will have integer values no less than 1. In other words, the resulting adjacency matrix represents a weighted graph with integer weights taken from $[1,\infty)$, so Lemma \ref{lemma:suf and nec condi for sim nodes} is not applicable in this case. There are two possibilities (perhaps more). First, generalize Lemma \ref{lemma:suf and nec condi for sim nodes} to weighted graphs. The difficulty is in condition 2, assuming a vertex is simplicial, to calculate the exact value of each entry in the matrix $M$ requires the knowledge of which edge is missing from the current graph because edge weights are different. But this is unknown to Bob. Second, reduce all non-zero values in the output vector $\Vec{s}$ to 1, so that when applying Equation \ref{equ:delete sim nodes} the resulting adjacency matrix consists of only 0s and 1s. This, however, seems to be a mission impossible. Because even for a length 2 vector $\Vec{s}=(s_1,s_2)$ where exactly one of the element is 0, it seems impossible without knowing which element is 0 to transfer $\Vec{s}$ to a unit integer vector $(1,0)$ or $(0,1)$ with only additions, subtractions and multiplications.

Because of the limited operations allowed in the current HE schemes, once the vector $\Vec{s}$ is updated by Bob, he needs to send $\Vec{s}$ to Alice for decryption, updating and encryption, so that Bob can carry out the next round of computations with the updated $\Vec{s}$ using Equation \ref{eq:he find sim} and Equation \ref{equ:delete sim nodes}. In this process, all Alice needs to do is to decrypt $\Vec{s}$ with the secret key and replace the non-zero elements in $\Vec{s}$ with 1s. Once done, all elements in the updated $\Vec{s}$ will be re-encrypted and shared with Bob again without revealing any additional information about the graph because of HE's semantic security.\footnote{If Alice only re-encrypts the non-zero elements and uses the same encryptions for the zero elements, Bob will be able to tell which vertices were re-encrypted and which are not.} With the new $\Vec{s}$, Bob will then apply Equation \ref{equ:delete sim nodes} to the current adjacency matrix $A$ to ``remove'' simplicial vertices from the current graph. What is certain is that Bob has no knowledge of which vertices are removed from the current graph due to their simpliciality.

The interactive process between Alice and Bob is repeated several times until one of the following stopping conditions is satisfied.
\begin{enumerate}
    \item The vector $\Vec{s}=0$ is a zero vector. It indicates all vertices have been deleted from the original graph, so the original graph is chordal. 
    \item The non-zero vector $\Vec{s}$ does not change in two consecutive rounds. It means no vertices can be further deleted from the current graph, so the original graph is not chordal. 
\end{enumerate}
In either case, Alice can terminate the process and obtain the final decision on the given graph. The maximum number of interactions between Alice and Bob is $n$, where $n$ is the number of vertices in the graph. This corresponds to the case where the graph is chordal and only one vertex can be removed in each iteration. The pseudocode of this process is presented in Algorithm \ref{alg:check chordal on enc graphs}.

\begin{figure}[ht]
    \centering
    \begin{minipage}{0.7\textwidth}
    
        \begin{algorithm}[H]
             \SetKwInOut{Input}{Input}
             \SetKwInOut{Output}{Output}
             \Input{an $n$ by $n$ graph adjacency matrix $A$, public key $pk$ and secret key $sk$}
             \Output{TRUE or FALSE}

             $A=Enc(pk,A)$ \tcp*{Encrypt $A$}
             
             $\Vec{s} = (1, \dots, 1)$ \\
             $k'=n+1$\\
             
             \While{$\Vec{s}\neq 0$ and $\sum(\Vec{s})<k'$}{

             $k'=\sum \Vec{s}$\\
             $\Vec{s}=Enc(pk,\Vec{s})$ \tcp*{Encrypt $\Vec{s}$}

                \For{$i\gets 1$ \KwTo $n$}{
                    $A[i,]=A[i,] \cdot \Vec{s}[i]$ \tcp*{Update $A$}
                    $A[,i]=A[,i] \cdot \Vec{s}[i]$
                }
                
                $A_2=A*A$ \\
                $M=A_2 \cdot A$\\
                
                \For{$i\gets 1$ \KwTo $n$}{
                    $\Vec{s}[i]=f(i)$ \tcp*{Equation \ref{eq:he find sim}}
                }
                
                $\Vec{s}=Dec(sk,\Vec{s})$ \tcp*{Decrypt $\Vec{s}$}
                
                \For{$i\gets 1$ \KwTo $n$}{
                    \uIf{$\Vec{s}[i] \neq 0$}{
                        $\Vec{s}[i]=1$ \tcp*{Update $\Vec{s}$}
                    }
                }
                
                
             }
             
             \eIf{$\Vec{s}=0$}{
               return TRUE\;
               }{
               return FALSE\;
              }
         \caption{Checking chordality on homomorphically encrypted graphs}
         \label{alg:check chordal on enc graphs}
        \end{algorithm}
    \end{minipage}
\end{figure}

The algorithm starts with Alice encrypting the adjacency matrix $A$ and initializing the vector $\Vec{s}$ for Bob to mark which vertices are simplicial. The \textit{While} loop is controlled by Alice as explained above in the two stopping conditions. Inside the \textit{While} loop, Bob performs the arithmetic operations on the ciphertext $A$ to update the vector $\Vec{s}$ that is also a ciphertext. Once Bob completes a round of computations, Alice decrypts $\Vec{s}$ and updates it in order to decide whether or not the algorithm should proceed.

\section{Conclusion}
This paper re-factored an algorithm for checking graph chordality on homomorphically encrypted graphs. The re-factored algorithm runs on encrypted graph adjacency matrices with only addition, subtraction and multiplication operations, but it requires multiple rounds of communications between the data owner Alice and the computational agent Bob to iteratively remove vertices from the current graph in an encrypted manner. In addition, because the current HE schemes only allow certain operations to be performed on encrypted data, the running time of the algorithm in each interaction is $O(n^3)$ which is slower than a linear time algorithm on the unencrypted graphs.

\section*{Acknowledgement}
I thank Dr Kee Siong Ng (The Australian National University) and Dr Lloyd Allison (Monash University) for useful discussions.  

\newpage 
\bibliography{references}
\bibliographystyle{abbrvnat}
\end{document}